\newtheorem{theorem}{Theorem}
\newtheorem{lemma}[theorem]{Lemma}
\def\qed{\endIEEEproof}
\newcommand{\junk}[1]{}
\newcommand{\suppress}[1]{}
\newcommand{\ignore}[1]{}
\newcommand{\poly}{\mbox{poly}}
\newcommand{\Cp}{C}
\newcommand{\eY}{Y}
\newcommand{\fs}{q}
\newcommand{\Field}{\mathbb{F}}
\newcommand{\Block}{n}
\newcommand{\Bl}{n}
\newcommand{\Rate}{R}
\newcommand{\Numzo}{Z_O}
\newcommand{\Numzi}{Z_I}
\newcommand{\eX}{X}
\DeclareMathOperator{\rank}{\sf rank\hspace{0.1em}}
\newcommand{\Fq}{\mathbb{F}_q}
\newcommand{\FQ}{\mathbb{F}_Q}
\newcommand{\mat}[1]{\begin{bmatrix} #1 \end{bmatrix}}
\newcommand{\calS}{\mathcal{S}}
\title{Network Codes Resilient to Jamming and Eavesdropping}
\author{
\begin{tabular}{cccc}
Hongyi Yao & Danilo Silva & Sidharth Jaggi & Michael Langberg
\\
{\small Tsinghua University}&
{\small State University of Campinas}&
{\small Chinese University of Hong Kong}
&{\small The Open University of Israel}
\end{tabular}
\thanks{The work of Hongyi Yao was supported in part by National Natural Science Foundation of China Grant 60553001, the National Basic Research Program
of China Grant 2007CB807900 and 2007CB807901. The work of Danilo Silva was supported by FAPESP grant 2009/15771-7.
The work of Sidharth Jaggi was supported  RGC GRF grant 412608, 411008, and 411209, RGC AoE grant on Institute of
Network Coding, established under the University Grant Committee of Hong Kong, CUHK MoE-Microsoft Key Laboratory of Humancentric
Computing and Interface Technologies, Direct Grant (Project Number 2050397) of The Chinese University of Hong Kong, and two
gift grants from Microsoft and Cisco. The work of Michael Langberg was supported in part by ISF grant 480/08. }
}
\begin{document}
\maketitle

\begin{abstract}
We consider the problem of communicating information over a network secretly and reliably in the presence of a hidden adversary who can eavesdrop and inject malicious errors.
We provide polynomial-time, rate-optimal distributed network codes for this scenario, improving on the rates achievable in~\cite{Ngai.Yeung2009:Secure-Error-Correcting}.
Our main contribution shows that as long as the
sum of the adversary's jamming rate $\Numzo$ and his eavesdropping
rate $\Numzi$ is less than the network capacity $\Cp$, ({\it i.e.,} $\Numzo +
\Numzi < \Cp$), our codes can communicate (with vanishingly small error probability) a single bit correctly and without leaking any information to the adversary. We then use this to design codes that allow communication at the optimal source rate of $\Cp -
\Numzo-\Numzi$, while keeping the communicated message secret from the adversary.
Interior nodes are oblivious to the presence of adversaries and
perform random linear network coding; only the source and
destination need to be tweaked. In proving our results we correct an error in prior work~\cite{Jaggi.Langberg2007} by a subset of the authors in this
work.
\end{abstract}

\section{Introduction}
\label{sec:intro}


A source Alice wishes to transmit information to a receiver Bob over
a network containing a malicious adversary Calvin. Such
scenarios face at least two challenges -- Calvin
might eavesdrop on private communications, or he might disrupt communications by
injecting fake information into the network. In the network coding model
this second danger may be even more pronounced since all nodes,
including honest ones, mix information. In this case, even a small
number of fake packets injected by Calvin may end up corrupting {\it
all} the information flowing in the network, causing decoding
errors.

In this work we consider the {\it secrecy} and {\it {error control}}
issues together. Namely, we design schemes that allow reliable
network communications in the presence of an adversary that can both
jam and eavesdrop, without leaking information to him. In particular,
suppose the network's  min-cut from Alice to Bob
is $\Cp$, and Calvin eavesdrops on
$\Numzi$ links and corrupts $\Numzo$ links\footnote{We consider a
model where network links rather than nodes are eavesdropped and
corrupted; eavesdropping on a node is
equivalent to eavesdropping on links incoming to it, and corrupting
a node is equivalent to corrupting the links outgoing from it.}. We
demonstrate schemes that are distributed, computationally efficient
to design and implement, {and can be used to communicate a {\it single} bit secretly and without error. We then use this scheme as a tool to improve on prior work~\cite{Jaggi++2008}, and achieve a provably optimal rate of $\Cp-\Numzo-\Numzi$.}


Related problems have been considered in the past. Prior results may be classified in the following three categories.

For networks containing adversaries that only eavesdrop on some
links (without jamming transmissions), the work
of~\cite{Cai.Yeung2002:Secure} provided a tight
information-theoretic characterization of the {\it secrecy
capacity}, {\it i.e.}, the optimal rate achievable without leaking
any of Alice's information to Calvin. Efficient schemes achieving
this performance were proposed
by~\cite{Feldman++2004:CapacitySecure,Rouayheb.Soljanin2007,Silva.Kschischang2008:ISIT}.
Cryptographically (but not
information-theoretically) secret schemes for this scenario were
also considered in~\cite{SecureKeyShare}.

For networks containing adversaries with unlimited eavesdropping
capabilities and limited jamming capabilities, prior related work
has focused primarily on the detection of Byzantine
errors~\cite{HoLKMEK:04}, non-constructive bounds on the achievable
{\it {zero-error}} rates~\cite{YeuC:06a,CaiY:06a}, and network
error-correcting codes~\cite{Matsumoto} (which have high design
complexity)
and~\cite{Jaggi++2008,Silva++2008,Kotter.Kschischang2008,Jaggi.Langberg2007}
(which have low design complexity). Results for this setting are
also available under cryptographic
assumptions~\cite{SignLinearSpace1,SignLinearSpace2}.

The scenario closest to the one considered in this work, with
limitations on both Calvin's  eavesdropping power $\Numzi$ and his
jamming power $\Numzo$, have been considered
in~\cite{Jaggi++2008,Jaggi.Langberg2007,Silva2009(PhD),Ngai.Yang2007:DeterministicSEC,Ngai.Yeung2009:Secure-Error-Correcting}.
Under the requirement of \emph{zero} error probability, the maximum rate of secret and reliable communication is given by $\Cp - 2\Numzo - \Numzi$. Schemes achieving this rate have been proposed in~\cite{Ngai.Yang2007:DeterministicSEC,Ngai.Yeung2009:Secure-Error-Correcting}
(high design complexity schemes) and~\cite{Silva.Kschischang2008:Security-IT,Silva2009(PhD),Silva.Kschischang2010:ISIT} (low
design complexity schemes). The optimality of such a rate has been shown in~\cite{Ngai.Yeung2009:Secure-Error-Correcting} for single-letter coding and in~\cite{Silva.Kschischang2010:ISIT} for block coding.

If the requirement of zero error probability is relaxed to \emph{vanishingly small} error probability, as considered here, then higher rates may be achieved. In particular, the work in~\cite{Jaggi++2008} provided computationally efficient communication schemes (but with no guarantees on secrecy) at rate $\Cp - \Numzo$ as long as the technical requirement $\Cp > 2\Numzo + \Numzi$ was satisfied. Work by a subset of the authors of this paper claimed in~\cite{Jaggi.Langberg2007} to improve this technical requirement to $\Cp > \Numzo + \Numzi$. As we demonstrate in Section~\ref{sec:errata}, prior proof of the claim was incorrect, and
Section~\ref{sec:main} gives a correct proof of the claim. Combining these results with the secrecy scheme of \cite{Silva.Kschischang2008:ISIT} allows us to obtain the optimal rate of $\Cp - \Numzo - \Numzi$ when secrecy constraints are incorporated.



\section{Main Results}
\label{sec:main}
The main results of this work are
Theorems~\ref{thm:bit} and~\ref{thm:main}.
%
\medskip
\begin{theorem}
\label{thm:bit} If $\Cp > \Numzo - \Numzi$ then Alice can communicate a single bit correctly to Bob (while keeping it secret from Calvin) using codes of
computational complexity $O(\poly(\Cp,\log_2 q))$ and error probability $O(q^{-C})$.
\end{theorem}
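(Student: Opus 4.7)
The plan is to send the bit $b$ by embedding it in a large random payload so that (a) Calvin's $\Numzi$-link eavesdrop leaves the bit statistically uncovered, and (b) Calvin's $\Numzo$-link jamming flips Bob's decision only with probability $O(q^{-\Cp})$. I would combine a Cai--Yeung-style secrecy coset code for the first property with a randomized hash/parity check for the second, specializing the analysis from the full-rate setting to the much looser single-bit regime.

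First I would run random linear network coding over $n = \Theta(\Cp)$ time steps, with packet headers carrying a column index so that Bob can recover the $\Cp \times \Cp$ transfer matrix $T$ (invertible except with probability $O(q^{-\Cp})$). Alice transmits a matrix $M \in \Fq^{\Cp \times n}$ structured as follows. She fills $\Numzi$ rows of $M$ with uniformly random symbols $R$; she places in one further row a vector that carries $b$ together with a short hash whose key is read off from $R$; and she fills the remaining $\Cp - \Numzi - 1$ rows with rank-metric parity checks of the previous block. The standard Cai--Yeung coset argument then shows that any $\Numzi$-dimensional linear projection of $M$ is uniform over $\Fq^{\Numzi\times n}$, so Calvin's view is statistically independent of both $b$ and the hash key.

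On Bob's side, he inverts $T$ to obtain $\widetilde M = M + T^{-1}T_{\mathrm{adv}}Z$, a matrix that is at rank distance at most $\Numzo$ from the transmitted $M$. For each candidate $b'\in\{0,1\}$ he reconstructs the implied hash key from the recovered secrecy rows and tests consistency against the message row and the parity rows. The true bit always passes the test; the wrong bit passes only if Calvin's rank-$\Numzo$ error lands in a specific coset whose identity depends on the hash key. Because Calvin chose his error without any information about the hash key (by the independence established above), this coset is uniformly placed, and the false-positive probability is $O(q^{-\Cp})$ once the block length is tuned appropriately.

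The main obstacle, I expect, is controlling the false-positive event in the regime where $\Cp - \Numzi \leq 2\Numzo$, so that classical bounded-distance decoding in the rank metric does not work. The essential lever is the secrecy property itself: it forces Calvin to commit to his error pattern before the hash key is revealed, converting what would otherwise be a worst-case error analysis into an average-case one over $R$. A union bound over the at most $q^{O(\Cp)}$ adversarial error patterns (after the parametric form of $T_{\mathrm{adv}}$ is fixed by the network topology) then closes the gap and yields the claimed $O(q^{-\Cp})$ failure probability. All the operations --- RLNC, Gaussian elimination for $T^{-1}$, and the hash/parity check --- run in $\poly(\Cp, \log_2 q)$ time, matching the stated complexity.
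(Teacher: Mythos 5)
Your proposal takes a genuinely different route from the paper --- a keyed-hash/authentication test on a single message row, with the key hidden by the secrecy layer --- whereas the paper encodes the bit in the \emph{rank} of a $C'\times C'$ matrix $S$ (with $C'=\Cp-\Numzi$): $S=0$ for bit $0$ and $S$ uniform for bit $1$, sent through the MRD secrecy code. Unfortunately your route has genuine gaps, and it is uncomfortably close to the scheme of~\cite{Jaggi.Langberg2007} that Section~\ref{sec:errata} shows to be broken.

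First, the claim that ``the true bit always passes the test'' is unjustified. You place $b$ and its hash in \emph{one} row of $M$; a rank-one additive error (i.e.\ $\Numzo=1$) already suffices for Calvin to arbitrarily corrupt that row after Bob's inversion, so the consistency check for the \emph{true} bit can be made to fail. Hiding the hash key only prevents Calvin from forging the \emph{wrong} bit; it does nothing to protect the completeness side. The paper avoids this asymmetry entirely: the two codewords are separated by rank $C'=\Cp-\Numzi>\Numzo$, so a rank-$\Numzo$ perturbation can never make bit $0$ look like bit $1$ (zero error in that direction), and makes a uniform full-rank $S$ look low-rank only with probability $C'/Q$ (Lemma~\ref{Le:Trans-Bit}). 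Second, Bob cannot simply ``invert $T$'': the identity header from which he would learn the transfer matrix is itself hit by $Z$. The paper handles this with the RREF decomposition of~\cite{Silva2009(PhD)}, which yields $r=x+\hat{L}V^1+L^2\hat{V}+L^3V^3$ with $\hat{L},\hat{V}$ \emph{known} to Bob; he projects these out with $J$ and $K$, leaving a residual of rank at most $\Numzo-\max\{\mu,\delta\}<C'-\max\{\mu,\delta\}$, which is exactly what makes the rank test sound. Third, your closing union bound over $q^{O(\Cp)}$ error patterns against a per-pattern failure probability of $q^{-\Theta(\Cp)}$ does not obviously converge at block length $n=\Theta(\Cp)$; the correct argument (as in the paper) fixes Calvin's worst-case error, invokes the independence of the secret randomness from that error (guaranteed by $I(S;W)=0$), and bounds the failure probability for that fixed error --- no union bound over Calvin's strategies is needed or wanted.
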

\medskip

Combining the codes in Theorem~\ref{thm:bit} with the ``shared-secret'' codes in~\cite{Jaggi++2008} then gives us the following theorem.

\medskip
\begin{theorem}
\label{thm:main} No rate higher than $\Cp - \Numzo - \Numzi$ is
achievable. A rate of $\Cp - \Numzo - \Numzi$ is achievable with
codes of computational complexity $O(n\poly(\Cp,\log_2 q))$.
\end{theorem}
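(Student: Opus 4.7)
The plan is to prove the converse and achievability halves of Theorem~\ref{thm:main} separately. The converse $R \le \Cp - \Numzo - \Numzi$ follows by combining a cut-set/error-correction argument with a wiretap-secrecy argument. First, for any scheme achieving vanishing error probability (even with no secrecy constraint), a Singleton-type bound on a min-cut of size $\Cp$ limits the rate to $\Cp - \Numzo$, since Calvin's $\Numzo$ adversarial injections on the cut cost $\Numzo$ units of reliable throughput. Layering the secrecy requirement on top, Calvin's $\Numzi$ eavesdropped links add a standard Shannon/Fano wiretap penalty of $\Numzi$, yielding $R \le \Cp - \Numzo - \Numzi$. This follows the template in~\cite{Jaggi++2008,Silva.Kschischang2008:ISIT}.

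For achievability, I would build a three-stage layered construction. In the first stage, Alice and Bob execute the single-bit scheme of Theorem~\ref{thm:bit} over a preamble of length sublinear in the block length $n$, in order to share a random string $K$ that is secret from Calvin; this stage contributes vanishing rate overhead. In the second stage, using $K$ as a shared secret, Alice and Bob run the shared-secret code of~\cite{Jaggi++2008} to reliably deliver a super-message $W$ of rate $\Cp - \Numzo$ across the network. In the third stage, $W$ is the carrier for a coset/wiretap code in the style of~\cite{Silva.Kschischang2008:ISIT}: Alice draws $\Numzi$ symbols of local randomness $R$ and sets $W$ to be an invertible linear image of $(M,R)$ whose structure (for instance the generator matrix of a maximum-rank-distance code stacked below the message coordinates) ensures that every $\Numzi$-dimensional linear projection of $W$ is statistically independent of $M$.

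The main obstacle I anticipate is a careful joint reliability-and-secrecy analysis across the three stages. Reliability of the inner layer follows from the guarantees of~\cite{Jaggi++2008} given a secret key, after which Bob inverts the outer linear map to recover $M$. For secrecy, one must show that Calvin's entire view across the block --- his $\Numzi$ eavesdropped link values throughout both preamble and main phase, together with feedback from his own adaptively chosen error injections --- is asymptotically independent of $M$. The coset structure handles the main phase once one shows that the network transfer matrix confines Calvin's observation of $W$ to a fixed $\Numzi$-dimensional linear image; Theorem~\ref{thm:bit} handles the preamble. The delicate sub-step is ruling out cross-leakage: that Calvin's adaptive errors in the preamble do not compromise later secrecy, and that the correlation between $K$ (used by both layers) and the outer coset structure does not open a side channel.

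Computational complexity adds up as claimed: the preamble costs $O(\poly(\Cp,\log_2 q))$ by Theorem~\ref{thm:bit}, the inner shared-secret layer costs $O(n\poly(\Cp,\log_2 q))$ by~\cite{Jaggi++2008}, and the outer MRD coset layer is $O(n\poly(\Cp,\log_2 q))$ linear algebra over $\Fq$, giving an overall bound of $O(n\poly(\Cp,\log_2 q))$.
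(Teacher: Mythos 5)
Your overall architecture matches the paper's: a cut-based converse combining the jamming and eavesdropping penalties, and an achievability construction that composes Theorem~\ref{thm:bit}'s secret-sharing, the shared-secret error-control code of~\cite{Jaggi++2008}, and an MRD coset code for secrecy. However, two steps as you have written them do not go through.

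First, in the converse you attribute the $\Cp-\Numzo$ term to ``a Singleton-type bound.'' The Singleton bound for adversarial network errors gives $\Cp-2\Numzo$, which is exactly the zero-error bound of~\cite{Ngai.Yeung2009:Secure-Error-Correcting} that this theorem is improving upon; it cannot yield $\Cp-\Numzo$. The correct mechanism (and the one the paper uses) is that Calvin injects \emph{uniformly random} noise on $\Numzo$ cut edges, making $\mathbf{Y}_j$ statistically independent of $(\mathbf{X},\mathbf{Y}_e,\mathbf{Y}_u)$, so those edges carry zero mutual information; combined with $I(\mathbf{X};\mathbf{Y}_e)=0$ from the secrecy requirement and Fano's inequality, a single chain of inequalities confines the rate to the $\Cp-\Numzo-\Numzi$ untouched edges. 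Your two-stage ``Singleton then wiretap'' layering needs to be replaced by this random-jamming argument (or at least the first stage does).

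Second, your achievability shares an \emph{independent random} key $K$ in a temporally separate preamble and then feeds $K$ to the scheme of~\cite{Jaggi++2008}. But the ``shared secret'' in that scheme is not an arbitrary key: it is the tuple $(\rho_1,\ldots,\rho_\alpha,\mathbb{H})$ with $\mathbb{H}=\bar{X}P$ a hash of the codeword itself, so it can only be computed after the (secrecy-encoded) payload $M$ is fixed, and it must reach Bob both secretly and reliably. An independent $K$ delivered beforehand does not plug in without an additional mechanism (e.g., using $K$ as a one-time pad plus a reliable low-rate channel, which you would then have to construct). The paper avoids this by noting that the single-bit scheme transmits an \emph{arbitrary} bit, not a random one: Alice computes the $k$-bit hash of $M$ and embeds $k$ parallel instances of the Theorem~\ref{thm:bit} header in the \emph{same} transmission block as the payload, so no temporal separation or independence of the secret is needed. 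This also largely dissolves your ``cross-leakage'' worry: each header bit and the payload are separately protected by independent rate-$\Numzi$ noise matrices through the MRD coset structure, so Calvin's single $\Numzi$-dimensional observation $BX$ of the whole block reveals nothing about any of them. With these two repairs your proof coincides with the paper's.
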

\medskip

\noindent {\bf Note:}
In~\cite{Ngai.Yeung2009:Secure-Error-Correcting}, Ngai et al show
that $\Cp - 2\Numzo - \Numzi$ is an upper bound on the rate, assuming no
error events, and single-letter coding (respectively equations ($87$) and ($65$) in their proof).
Our work achieves higher rates by instead assuming asymptotically negligible probability of error, and block coding.

\subsection{High-level overview of proofs and techniques}

We first show in Section~\ref{sec:upperboud} that
$\Cp-\Numzo-\Numzi$ is an upper bound on the rate at which a secret
message can be correctly transmitted from Alice to Bob, by
demonstrating an attack that Calvin can use to successfully disrupt
communication if Alice tries to communicate at any higher rate.
We then construct efficient
codes that essentially achieve rate $\Cp-\Numzo-\Numzi$.
Our codes consist of the three layers
described below.  All the three layers are embedded along with
Alice's message into her packets and then transmitted through the
network using random linear network codes.

\noindent {\bf Secret-sharing layer:}
In Section~\ref{sec:single-bit} we {first prove Theorem~\ref{thm:bit} by showing} how to
communicate a single bit secretly and correctly over a network
containing adversaries that can jam and eavesdrop, as long as
$\Cp>\Numzi+\Numzo$. This layer is important for the error-control
layer described later, and can be implemented via a ``small" header
appended to each network coded packet. When $k$ secret bits are to
be shared, the scheme is repeated $k$ times in each transmitted
packet header, for a secret-sharing header of total length
$\Cp+k\Cp(\Cp-\Numzi)$. 
The secret-sharing layer consisting of the following components:

\noindent {\it $1$. Identity matrix:} As standard in random linear network
coding~\cite{Ho++2006}, \cite{Silva++2008}, the identity matrix
$I_C$ is appended to convey to the receiver information about the
linear transform induced by the random linear network code.

\noindent {\it $2$. Bit matrices:} For each secret
bit, $i \in \{1, \ldots, k\}$, if the $i$th secret bit equals $0$,
the $(\Cp-\Numzi)\times \Cp(\Cp-\Numzi)$ matrix $S^i$ (over $\Field_q$) is chosen as a zero matrix; otherwise, $S^i$
is chosen independently and uniformly at random from all
$(\Cp-\Numzi)\times \Cp(\Cp-\Numzi)$ matrices.
We refer to $S^i$ as a {\it bit matrix}.
The idea is that the
rank of the matrices corresponding to bit $0$ is much smaller than
the rank of the matrices corresponding to bit $1$---due to the
limitation on the numbers of packets Calvin can observe or inject,
with high probability he cannot change the rank of the corresponding
received matrix by too much. Details are given in
Lemma~\ref{Le:Trans-Bit}.

\noindent {\it $3$.Random matrix:}
Alice adapts the scheme of~\cite{Silva.Kschischang2008:ISIT} to keep the bit matrices secret from Calvin.
That is, for each secret bit $i$ that Alice wishes to communicate to Bob, she combines the bit matrix $S^i$ with a
random noise matrix $N^i$ (at rate $\Numzi$). It can be shown that
it is impossible for Calvin to glean any useful information (since
it can only eavesdrop at rate $\Numzi$).

Section~\ref{subsec:completerate} combines the secrecy layer with the two other layers described below to complete our code construction.

\noindent{\bf Secrecy layer:}
As done with the random matrices $N^i$ in the secret-sharing layer above, a random matrix $N$ is used to preserve the secrecy of the source message $S$ (of rate $\Cp-\Numzo-\Numzi$), yielding a encoded matrix $M$ (of rate $C-Z_O$).

\noindent {\bf Error control layer:}
In this layer Alice uses the ``shared-secret" scheme outlined in
Theorem $1$ of~\cite{Jaggi++2008}. That is, Alice first takes a
secret linear hash to her secrecy-encoded message $M$ 
to generate a small hash value. Both the linear hash and
the resulting hash value (say $k$ bits in all) are transmitted to
Bob using the secret-sharing layer. Alice then combines her data
with a zero-value matrix (of rate $\Numzo$), such that Bob can
use the secret hash to {\em distill} Alice's codeword $M$
from the corrupted information reaching the destination.

Vis-a-vis our secret-sharing scheme of
Section~\ref{sec:single-bit}, the work
of~\cite{Jaggi.Langberg2007} (by a subset of the authors of this
work) claimed to have the same result. However, we show in
Section~\ref{sec:errata} that the
scheme proposed in~\cite{Jaggi.Langberg2007} is incorrect by giving  an attack that Calvin can use
to ensure that Bob has a significant probability of decoding error.

%
%

\section{Network Model  and Problem Statement}\label{sec:model}
We use the general model proposed in~\cite{Jaggi++2008}. To simplify
notation we consider only the problem of communicating from a single
source to a single destination\footnote{Similarly to many network coding
algorithms, our techniques generalize to multicast problems.}.

\subsection{Network Model}
Alice communicates to Bob
over a network with an attacker (adversary) Calvin hidden somewhere in it. Calvin aims to
disrupt  the transfer of information from Alice to Bob and in the
meantime eavesdrop the information Alice sends. He can observe some
of the transmissions, and can inject his own fake transmissions.

Calvin is computationally unbounded,
knows the encoding and decoding schemes of Alice and Bob,
and the network code implemented
by the interior nodes.
He also knows the network topology,
and he gets to choose which network links to eavesdrop on and which ones to
corrupt.

The network is modeled as a directed and delay-free graph 
whose edges each have capacity equal to
one symbol of a finite field of size $\fs$,
$\Field_\fs$, per unit time\footnote{For ease of presentation edges
with non-unit capacities are not considered here (as in~\cite{Jaggi++2008}, they may be modeled via block coding
and parallel edges).}. All computations are
over $\Field_{\fs}$. The {\em network capacity},
denoted by $\Cp$, is the {\it min-cut from source to destination}\footnote{For the corresponding multicast case, $\Cp$ is defined as the
minimum of the min-cuts over all destinations. 
It is well-known that $\Cp$ also equals the time-average of the maximum
number of packets that can be delivered from Alice to Bob, assuming
no adversarial interference, i.e., the {\it max flow}.}.

Each packet contains $\Block$ symbols
from $\Field_{\fs}$. Alice's message is denoted $S \in \calS$. To
send this to Bob over the network, Alice encodes it into a matrix $X \in \Fq^{C \times \Bl}$, possibly using a \emph{stochastic encoder}\footnote{The random coin tosses
made by Alice as part of her encoding scheme are not known to either
Calvin or Bob.}. The $i^{th}$ row in
$\eX$ is Alice's $i^{th}$ packet. As in~\cite{Ho++2006}, Alice and internal nodes in take random linear combinations of their observed packets
to generate their transmitted packets.

Analogously to how Alice generates $\eX$, Bob
organizes received
packets into a matrix $\eY$.
The $i^{th}$ received packet
corresponds to the $i^{th}$ row of $\eY$. The random linear network
code used by Alice and all internal nodes induces a linear transform
$A$ from $\eX$ to $\eY$, such that $\eY=A\eX$ when no error is
induced by the adversary\footnote{For the ease of notation we assume Bob removes redundant incoming edges so that the number of edges reaching Bob equals the min-cut
capacity $\Cp$ from Alice to Bob.}. Thus $\eY$ is a matrix in
$\Field_q^{\Cp\times \Bl}$, and $A \in \Fq^{C \times C}$. Hereafter we assume that the matrix $A$ is invertible, which happens with high probability if $q$ is sufficiently large~\cite{Ho++2006}.

Calvin can eavesdrop on $\Numzi$ edges, and can inject (possibly fake) information at $\Numzo$ locations\footnote{We assume throughout that the information injected into the network by Calvin
is {\em added} to the original information transmitted
(here we consider addition over our field $\Field_{\fs}$).}, in the network. The matrix received by Bob is then  $Y = AX + Z$,
where $Z$ corresponds to the information injected by Calvin as seen by Bob. Note that the limitation of Calvin's jamming capacity implies that $\rank (Z) \leq Z_O$. Similarly, Calvin's observation can be described as a matrix $W = BX$,
where $B \in \Fq^{Z_I \times C}$ is the linear transform undertaken by $X$ as seen by Calvin.


\subsection{Problem Statement}
Alice wishes to communicate with Bob with perfect secrecy and vanishingly small error probability. That is, Alice's scheme is {\em perfectly secret} if
\begin{equation}\label{eq:secrecy-condition}
  I(S;W) = 0 \quad \forall B \in \Fq^{Z_I \times C}
\end{equation}
i.e., Calvin obtains no information about Alice's message.
The {\em error probability} is the probability
that Bob's reconstruction $\hat{S}$ of Alice's information $S$ is inaccurate, i.e., $P[\hat{S} \neq S]$. We consider the error probability of the worst-case scenario\footnote{Our interest is to design communication schemes that do not rely on the specific network topology or network code used.}. Namely, a scheme has error probability less than $\epsilon$ if
$  P[\hat{S} \neq S] < \epsilon \quad \forall A,Z$, where $A$ is assumed to be nonsingular, and $\rank (Z) \leq Z_O$.
The {\it rate} $\Rate$ of a scheme is the number of information bits of information Alice transmits to Bob, amortized by the size 
of a packet in bits, i.e., $R = \frac{1}{n} \log_q |\calS|$.
The rate $\Rate$ is said to be {\em achievable}
if for any $\epsilon > 0$, any $\delta>0$,
and sufficiently large $\Bl$,
there exists a perfectly secret block-length-$\Bl$ network code
with rate at least $R-\delta$ and a probability of error less than $\epsilon$.




\begin{table}[htb!]
\caption{Summary of commonly used notation} 
\centering 
\begin{tabular}{c c}
\hline\hline Notation&Meaning
\\ \hline $\Cp$& Capacity \\ \hline$\Numzi$ & Eavesdropping rate
\\ \hline $\Numzo$ & Jamming rate \\ \hline $\Bl$ & Packet length
\\ \hline $q$& Field size\\  \hline$Q = q^\Cp$ & Extension field size\\
\hline
\end{tabular}
\label{tab:SumResult}
\end{table}

\section{Converse for Theorem~\ref{thm:main}}
\label{sec:upperboud}

We start by presenting an attack that Calvin
may use to force the achievable rate to at most $\Cp - \Numzo -
\Numzi$, thereby demonstrating that this is indeed an upper bound on
the achievable rate. Let $\{e_1,e_2,...,e_\Cp\}$ be a set of edges
that form a cut from Alice to Bob. Calvin jams the edges in
$\{e_1,e_2,...,e_{\Numzo}\}$ by adding random errors on them.
Further, Calvin eavesdrops on edges in
$\{e_{\Numzo+1},e_{\Numzo+2},...,e_{\Numzo+\Numzi}\}$. Let ${\mathbf
X}$ be the random variable denoting Alice's information. Let
$\mathbf {Y}_j$, $\mathbf {Y}_e$, and $\mathbf {Y}_u$ be the random
variables denoting the packets carried by the jammed
edges $\{e_1,e_2,...,e_{\Numzo}\}$, eavesdropped edges
$\{e_{\Numzo+1},e_{\Numzo+2},...,e_{\Numzo+\Numzi}\}$, and untouched
edges $\{e_{\Numzo+\Numzi+1},e_{\Numzo+\Numzi+2},...,e_{\Cp}\}$
respectively. Let $\mathbf {Y}$ be the random variable denoting the
packets received by Bob. Then
\begin{eqnarray}
\Block\Rate & = & H(\mathbf{X}) =  H(\mathbf{X}|\mathbf{Y}) + I(\mathbf X; \mathbf {Y})\label{eq:bnd0}\\
& \leq & 1 + \epsilon\Block\Rate + I(\mathbf X; \mathbf {Y})\label{eq:bnd05}\\
&\leq& 1 + \epsilon\Block\Rate + I(\mathbf X; \mathbf {Y}_j, \mathbf {Y}_e, \mathbf {Y}_u)\label{eq:bnd1}\\
& = & 1 + \epsilon\Block\Rate + I(\mathbf X; \mathbf {Y}_e, \mathbf {Y}_u)\label{eq:bnd2}\\
& = & 1 + \epsilon\Block\Rate + I(\mathbf X; \mathbf {Y}_e)+I(\mathbf X; \mathbf {Y}_u | \mathbf {Y}_e)\label{eq:bnd3}\\
& = & 1 + \epsilon\Block\Rate + I(\mathbf X; \mathbf {Y}_u | \mathbf {Y}_e)\label{eq:bnd4}\\
&\leq& 1 + \epsilon\Block\Rate +  H(\mathbf {Y}_u)\label{eq:bnd5}\\
& \leq& \Block\left [(\Cp-\Numzi-\Numzo) + \epsilon\Rate +
\frac{1}{\Block}\right ].\label{eq:bnd6}
\end{eqnarray}
Here (\ref{eq:bnd0}) follows from the fact that Alice's message is
uniformly distributed over ${\mathbf X}$, (\ref{eq:bnd05}) from
Fano's inequality, (\ref{eq:bnd1}) from the data processing
inequality, (\ref{eq:bnd2}) since Calvin adds random noise on the
edges he jams  and so $\mathbf {Y}_j$ is independent of $(\mathbf X,
\mathbf {Y}_e, \mathbf {Y}_u)$, (\ref{eq:bnd3}) by the chain rule
for mutual information, (\ref{eq:bnd4}) from the fact that
information-theoretic secrecy is required and so $I(\mathbf X;
\mathbf {Y}_e)=0$, (\ref{eq:bnd5}) by the fact that conditioning
reduces entropy and the definition of mutual information, and
finally (\ref{eq:bnd6}) by the fact that there are at most
$\Cp-\Numzi-\Numzo$ links corresponding to the random variable
$\mathbf {Y}_u$ and the alphabet-size upper bound on entropy.
Requiring $\epsilon \rightarrow 0$ as $\Block \rightarrow \infty$
gives the required result.

\section{Auxiliary Tools}

\newcommand{\zi}{Z_I}
\newcommand{\zo}{Z_O}

\subsection{Secrecy Coding}
\label{ssec:secrecy-layer}

Consider a special case of the problem where Calvin can eavesdrop $\zi < C$ packets but cannot jam any packets ($\zo = 0$). Below, we review a construction of a perfectly secret scheme that asymptotically achieves the maximum possible rate (i.e., the secrecy capacity) $R=C-\zi$. The scheme, proposed in \cite{Silva.Kschischang2008:ISIT}, is based on MRD codes. (For more details on MRD codes, see~\cite{Silva.Kschischang2008:ISIT}.)

Let $Q = q^C$ and let $\FQ$ be an extension field of $\Fq$. Let $\phi: \FQ \to \Fq^{1 \times C}$ be a vector space isomorphism. In addition, let $\phi_{m,n}: \FQ^{m \times n} \to \Fq^{m \times Cn}$ be a vector space isomorphism such that the $i$th row of $\phi_{m,n}(X)$ is given by $\mat{\phi(X_{i,1}) & \cdots & \phi(X_{i,n})}$. In other words, we expand each element of $X \in \FQ^{m \times n}$ as a length-$C$ row vector over $\Fq$ (with the number of columns in matrix increasing accordingly). We will omit the subscript from $\phi_{m,n}$ when the dimensions of the argument are clear from the context.

Let $H \in \FQ^{(C-\zi) \times C}$ be the parity-check matrix of a $[C,\zi]$ linear MRD code over $\FQ$. Let $T \in \FQ^{C \times C}$ be an invertible matrix chosen such that the first $C-\zi$ rows of $T^{-1}$ are equal to $H$. Assume that $n$ is divisible by $C$ and let $n' = n/C-1$.

In order to encode a given message $S \in \FQ^{(C-\zi) \times n'}$, Alice first generates a random matrix $N \in \FQ^{\zi \times n'}$ uniformly and independently from any other variables. Then, she computes $X = \mat{I_C & \phi(x)}$, where $x = T \mat{S \\ N}$.

After receiving $Y = AX = \mat{A & A\phi(x)}$, Bob computes $X = A^{-1} Y$ to recover $x = \phi^{-1}(\phi(x))$. Then, Bob can easily obtain $S$ since, by construction, $S = H x$.

Recall that Calvin's observation is given by $W = BX$, where $B \in \Fq^{\zi \times C}$. According to Theorem~4 of \cite{Silva.Kschischang2008:ISIT}, we have that $I(S;W) = 0$ for all $B$, and therefore (\ref{eq:secrecy-condition}) is satisfied. Thus, the scheme is indeed perfectly secret. 

The decoding complexity is given by $O(nC^2)$ operations in $\FQ$, which can be done in $O(nC^4)$ operations in $\Fq$.

\subsection{Error Control under a Shared Secret Model}
\label{ssec:error-control-layer}

Consider now the case where Calvin can jam $\zo < C$ packets and eavesdrop \emph{any} number of packets he choose. However, we drop the requirement of secret communication, i.e., all we require is that Bob can decode correctly.
In addition, suppose the existence of a {\em low} rate side channel, which Calvin cannot access, that enables Alice to transmit to Bob a small secret $\mathbb{S}$. Below, we review a coding scheme presented in \cite{Jaggi++2008} that can asymptotically achieve the maximum possible rate $R = C-\zo$.

Let $b = C-\zo$. We first describe how Alice produces the secret bit string $\mathbb{S}$ based on a given message $M \in \Fq^{b \times (n-b)}$. To begin with, she generates $\alpha=bC+1$ symbols $\rho_1,\rho_2,...,\rho_\alpha \in \Fq$ independently and uniformly at random. Let $P \in \Fq^{n \times \alpha}$ be the matrix given by $P_{(i,j)} = (\rho_j)^i$. Then, she computes a matrix $\mathbb{H} = \bar{X}P \in \Fq^{b \times \alpha}$, where $\bar{X} = \mat{I_b & M}$. The tuple $(\rho_1,\rho_2,...,\rho_\alpha, \mathbb{H})$, consisting in total of $\alpha(b+1)$ symbols in $\Fq$, comprises the message ``hash'' that should be secretly transmitted to Bob. The bit representation of this tuple yields the string $\mathbb{S} \in \{0,1\}^k$, consisting of $k = \alpha(b+1) \log_2 q$ bits. Over the main channel, Alice transmits the $C \times n$ matrix $X = \mat{\bar{X} \\ 0} = \mat{I_b & M \\ 0 & 0}$.

Assuming that $(\rho_1,\rho_2,\ldots,\rho_\alpha, \mathbb{H})$ is secretly and correctly received by Bob, let us proceed to the description of Bob's decoder. First, Bob reconstructs the matrix $P$. Bob obtains $Y = A X + Z$, where $Z \in \Fq^{C \times n}$ has rank at most $Z_O$. This can also be written as $Y = \tilde{A} \bar{X} + Z$, where $\tilde{A}$ consists of the first $b$ columns of $A$. Let $\bar{Y}$ be the reduced row echelon form of $Y$. It is shown in \cite{Jaggi++2008} that, with probability at least 
$1 - O(1/q)$ for any fixed network,
$\bar{X}$ can be written as $\bar{X} = U\bar{Y}$ for some $U \in \Fq^{b \times C}$. It is also shown in \cite{Jaggi++2008} that, with probability at least $1 - n^\alpha/q$, the system $U\bar{Y}P = \mathbb{H}$ has a unique solution in $U$. Bob solves this system to find $U$, computes $\bar{X} = U \bar{Y}$ and finally recovers $M$.

Overall, the probability of error of the scheme is at most 
$n^\alpha/q + O(1/q) = O(n^{C^2}/q)$,
while the decoding complexity is $O(nC^3)$ operations in $\Fq$.

\section{Sending a Single Bit Secretly and Reliably}
\label{sec:single-bit}

Let $C' = C-Z_I$. In this section, we show how Alice can transmit a secret bit reliably to Bob when $C>\zi+\zo$. We assume that $n = C(1+C')$, as this is the smallest packet length required for the scheme to work. Larger packet lengths can be easily handled by zero-padding the transmitted packets.

Let $T \in \FQ^{C \times C}$ and $H \in \FQ^{C' \times C}$ be as given in Section~\ref{ssec:secrecy-layer}.

\subsection{Alice's encoder}

Initially, Alice chooses a matrix $S \in \FQ^{C' \times C'}$ according to her secret bit: if the bit is 1, she picks $S$ uniformly at random; otherwise, if the bit is 0, she sets $S = 0$. Then, she sends $S$ to Bob using the secrecy scheme described in Section~\ref{ssec:secrecy-layer}. More precisely, she transmits $X = \mat{I_C & \phi(x)}$, where $x = T \mat{S \\ N}$ and $N \in \FQ^{\zi \times C'}$ is a uniformly random matrix chosen independently from $S$.

\subsection{Bob's decoder}

Recall that Bob receives a matrix $Y = AX + Z$, where $A \in \Fq^{C \times C}$ is nonsingular and $Z \in \Fq^{C \times C(1+C')}$ has rank at most $\zo$. Let $\bar{Y}$ denote the reduced row echelon form of $Y$. Consider first the case where $\bar{Y} = \mat{I & \phi(r)}$, for some $r \in \FQ^{C \times C'}$. It is possible to show that $Hr = S + E$, where $E \in \FQ^{C' \times C'}$ is a matrix of rank at most $\zo$. As will be shown later, with high probability, $Hr$ is full-rank if and only if Alice's secret bit is 1. Thus, Bob can decode by computing the rank of $Hr$.

In general, however, $\bar{Y}$ may not have the form described above. Nevertheless, as shown in \cite{Silva++2008,Silva2009(PhD)}, it is possible to extract from $\bar{Y}$ some matrices $r \in \FQ^{C \times C'}$, $\hat{L} \in \Fq^{C \times \mu}$ and $\hat{V} \in \FQ^{\delta \times C'}$ such that
\begin{equation}\nonumber
  r = x + \hat{L} V^1 + L^2 \hat{V} + L^3 V^3
\end{equation}
for some $V^1 \in \FQ^{\mu \times C'}$, $L^2 \in \Fq^{C \times \delta}$, $L^3 \in \Fq^{C \times \epsilon}$ and $V^3 \in \FQ^{\epsilon \times C'}$. Moreover, it is shown in \cite{Silva2009(PhD)} that $\mu,\delta \leq \zo$ and
\begin{equation}\nonumber
  \epsilon \leq \zo - \max\{\mu,\delta\}.
\end{equation}
Note that $\epsilon < C' - \max\{\mu,\delta\}$, since $\zo < C'$.

In possession of $r$, $\hat{L}$ and $\hat{V}$, Bob is now ready to decode the secrecy layer that has been applied to $x$.

We have
\begin{align}
Hr
&= H x + H \hat{L} V^1 + H L^2 \hat{V} + H L^3 V^3 \nonumber \\
&= S + \hat{\Lambda} V^1 + \Lambda^2 \hat{V} + \Lambda^3 V^3 \label{eq:secret-sharing-decoded-noncoherent}
\end{align}
where $\hat{\Lambda} = H \hat{L}$, $\Lambda^2 = H L^2$ and $\Lambda^3 = H L^3$.
Note
that $\hat{\Lambda} \in \FQ^{C' \times \mu}$ and $\hat{V} \in \FQ^{\delta \times C'}$ are known.

Now, let $J \in \FQ^{(C'-\mu) \times C'}$ and $K \in \FQ^{C' \times (C'-\delta)}$ be full-rank matrices such that $J \hat{\Lambda} = 0$ and $\hat{V} K = 0$. Then Bob can further simplify (\ref{eq:secret-sharing-decoded-noncoherent}) by computing
\begin{equation}\nonumber
  J H r K = J S K + J \Lambda^3 V^3 K.
\end{equation}
Note that $\rank (J \Lambda^3 V^3 K) \leq \epsilon < C' - \max\{\mu,\delta\}$.

Thus, Bob performs the following test. If $JHrK$ is full-rank, then Bob concludes that bit $1$ was sent; otherwise, Bob concludes that bit $0$ was sent.

With respect to complexity, computing $\bar{Y}$ takes $O(C^2n) = O(C^4)$ operations in $\Fq$. Computing $J$, $K$, $JHrK$ and the rank of $JHrK$ each take $O(C^3)$ operations in $\FQ$, which amounts to $O(C^5)$ in $\Fq$. Thus, the overall decoding complexity is $O(C^5)$ operations in $\Fq$.

\subsection{Probability of error analysis}

When bit 0 is sent, Bob never makes an error; he makes an error if and only if bit 1 is sent and $JHrK$ is not full-rank. Recall that, when bit 1 is sent, $S$ is uniformly distributed over $\FQ^{C' \times C'}$. Due to the secrecy encoding, Calvin has no information about $S$, and therefore $S$ is statistically independent from $\Lambda^3 V^3$. It follows that $S'=S+\Lambda^3 V^3$ is also uniformly distributed over $\FQ^{C' \times C'}$. Thus, the probability of error when bit 1 is sent is equal to the probability that $JS'K \in \FQ^{(C'-\mu) \times (C'-\delta)}$ is not full-rank for a uniform $S'$.

\medskip
\begin{lemma}\label{Le:Trans-Bit}
If $S' \in \FQ^{C' \times C'}$ is uniformly distributed then, for any $J \in \FQ^{(C'-\mu) \times C'}$ and any $K \in \FQ^{C' \times (C' - \delta)}$, the matrix $JS'K$ is full-rank with probability at least $1-C'/Q$.
\end{lemma}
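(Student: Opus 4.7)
My plan is to reduce the statement to the standard fact that a uniformly random matrix over a finite field is almost surely of full rank. The two steps are (i) show that $J S' K$ is itself uniformly distributed on $\FQ^{(C'-\mu) \times (C'-\delta)}$, and (ii) bound the probability that a uniform matrix of these dimensions fails to have full rank.

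\textbf{Step (i): the pushforward is uniform.} By their construction $J$ has full row rank $C'-\mu$ and $K$ has full column rank $C'-\delta$, so $J$ admits a right inverse $J^{+} \in \FQ^{C' \times (C'-\mu)}$ with $J J^{+} = I$ and $K$ admits a left inverse $K^{+} \in \FQ^{(C'-\delta) \times C'}$ with $K^{+} K = I$. Hence the linear map $\phi(M) = J M K$ from $\FQ^{C' \times C'}$ to $\FQ^{(C'-\mu) \times (C'-\delta)}$ is surjective, since any target $M_0$ is attained by $\phi(J^{+} M_0 K^{+}) = M_0$. A surjective linear map between finite vector spaces over a finite field has equi-sized fibers, so it pushes the uniform distribution on the domain to the uniform distribution on the codomain. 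Applying this to $S'$ shows that $J S' K$ is uniformly distributed on $\FQ^{(C'-\mu) \times (C'-\delta)}$.

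\textbf{Step (ii): full-rank probability of a uniform rectangular matrix.} Let $a = C'-\mu$ and $b = C'-\delta$, and assume WLOG $a \leq b$ (otherwise transpose; the transpose of a uniform matrix is uniform, and being full rank is a transpose-invariant property). Building a uniform $M \in \FQ^{a \times b}$ row by row, each new row extends the rank unless it lies in the span of the previous ones, so
\[
P[\rank(M) = a] = \prod_{i=0}^{a-1}\bigl(1 - Q^{i-b}\bigr),
\]
whence by the union bound the failure probability is at most $\sum_{i=0}^{a-1} Q^{i-b}$. Since $i \leq a-1 \leq b-1$, each summand is at most $Q^{-1}$, and the total is bounded by $a/Q \leq C'/Q$.

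I do not foresee a significant obstacle. The only items to confirm are that $J$ and $K$ genuinely have the stated full ranks (which is explicit in their construction as complementary bases) and the mild bookkeeping needed for the rectangular case; after these are in place the conclusion follows from a one-line geometric-series estimate.
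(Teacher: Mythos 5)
Your proof is correct, but it takes a genuinely different route from the paper's. The paper first reduces to the square case (WLOG $\mu\geq\delta$; deleting $\mu-\delta$ columns of $K$ gives a square matrix whose full rank implies that of $JS'K$), then treats $\det(JS'K)$ as a multivariate polynomial of degree at most $C'-\mu\leq C'$ in the entries of $S'$ and invokes the Schwartz--Zippel-type bound of Lemma~4 in \cite{Ho++2006} to get $P[\det(JS'K)=0]\leq C'/Q$. You instead observe that $M\mapsto JMK$ is a surjective linear map (using the full row rank of $J$ and full column rank of $K$), so $JS'K$ is exactly uniform on $\FQ^{(C'-\mu)\times(C'-\delta)}$, and then bound the full-rank failure probability of a uniform rectangular matrix by the row-by-row product formula and a geometric series. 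Your route is more elementary and self-contained (no external lemma), and it has the merit of making explicit where the full-rank hypotheses on $J$ and $K$ enter: the paper's Schwartz--Zippel step silently needs $\det(JS'K)$ to be a \emph{nonzero} polynomial, which is exactly the surjectivity you prove, and indeed the lemma as literally stated (``for any $J$ and any $K$'') is false for rank-deficient $J$ or $K$ --- both proofs rely on the full-rank property guaranteed by the construction preceding the lemma. The paper's route is shorter given the cited lemma; both yield the same $C'/Q$ bound (yours in fact gives the slightly sharper $(C'-\mu)/Q$, or $O(1/Q)$ after summing the geometric series).
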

\begin{proof}
  Without loss of generality, assume $\mu \geq \delta$. It suffices to prove the statement for $\mu=\delta$; if $\mu > \delta$, then removing $\mu-\delta$ columns from $K$ cannot possibly increase the rank of $JS'K$.


  For any fixed $J$ and $K$, consider the entries of $S'$ as variables taking values in $\FQ$. Then each entry of $JS'K$ is a multivariate polynomial over $\FQ$ with degree at most 1. It follows that $\det(JS'K)$ is a multivariate polynomial over $\FQ$ with degree at most $C'-\mu \leq C'$. Note that, if $Q \leq C'$, the statement follows trivially, so assume $Q > C'$. From \cite[Lemma 4]{Ho++2006}, we have that $P[\det(JS'K)=0] \leq C'/Q$.
\end{proof}
\medskip

Thus, the probability of error of the scheme is upper bounded by $C'/Q \leq C/q^C$, which can be made arbitrarily small by choosing $q$ sufficiently large. This proves Theorem~\ref{thm:bit}.

\section{Achievability for Theorem~\ref{thm:main}}
\label{subsec:completerate}

We now describe a coding scheme that achieves rate $R = C - Z_I - Z_O$ asymptotically in the packet length $n$.

As before, assume that $n$ is divisible by $C$ and let $n' = n/C - (1+kC')$, where $k = (bC+1)(b+1)\log_2 q$.

Let $H \in \FQ^{C' \times C}$ be the parity-check matrix of a $[C,\zi]$ linear MRD code over $\FQ$. Let $T \in \FQ^{C \times C}$ be an invertible matrix such that the first $C-\zi$ rows of $T^{-1}$ are equal to $H$.

Similarly, let $H_0 \in \FQ^{R \times b}$ be the parity-check matrix of a $[b,\zi]$ linear MRD code over $\FQ$, and let $T_0 \in \FQ^{b \times b}$ be an invertible matrix such that the first $R$ rows of $T_0^{-1}$ are equal to $H_0$.

\subsection{Alice's encoder}

First, given a message $S \in \FQ^{R \times n'}$, Alice computes $x = T_0 \mat{S \\ N}$, where $N \in \FQ^{Z_I \times n'}$ is chosen independently and uniformly at random. Then, she sets $M = \phi(x)$ and generates a string $\mathbb{S} \in \{0,1\}^k$ of $k$ bits according to the scheme described in Section~\ref{ssec:error-control-layer}. Next, for each $i$th bit of $\mathbb{S}$, Alice produces a matrix $S^i \in \FQ^{C' \times C'}$ according to the scheme described in Section~\ref{sec:single-bit}. Then, for each $i=1,\ldots,k$, she computes $x^i = T\mat{S^i \\ N^i}$, where each $N^i \in \FQ^{Z_I \times C'}$ is chosen uniformly at random and independently from any other variables. Finally, she produces a transmission matrix 
\begin{equation}\nonumber
  X = \mat{I_C & \phi(x^1) & \phi(x^2) & \cdots & \phi(x^k) & \mat{M \\ 0}}.
\end{equation}

\subsection{Bob's decoder}

For each $i=1,\ldots,k$, Bob extracts a submatrix $Y^i$ from $Y$ corresponding to the submatrix $\mat{I_C & \phi(x^i)}$ from $X$ (i.e., columns $1,\ldots,C,C+(i-1)C'+1,\ldots,C+iC'$). He then applies on $Y^i$ the decoder described in Section~\ref{sec:single-bit} to obtain each $i$th bit of $\mathbb{S}$. 

Similarly, Bob extracts a submatrix $Y^0$ consisting of the first $b$ and the last $n'C$ rows of $Y$. Note that $Y^0 = AX^0 + Z^0$, where $X^0 = \mat{I_b & M \\ 0 & 0} \in \Fq^{C \times (b+n'C)}$ and $Z^0$ has rank at most $\zo$. Then, Bob applies the decoder described in Section~\ref{ssec:error-control-layer} to obtain $M$.

Finally, Bob computes $x = \phi^{-1}(M)$ and $S = H_0x$.

\subsection{Overall Analysis}

\subsubsection{Secrecy analysis}
The secrecy of the message is guaranteed by the scheme of Section~\ref{ssec:secrecy-layer}.

\subsubsection{Error probability analysis} 
By the union bound, the probability that Bob makes an error when decoding the $k$-bit secret $\mathbb{S}$ is at most $k C/q^C \leq C^4 (\log_2 q)/q^C = O(\frac{\log_2 q}{q^C})$.
Given that the secret is decoded correctly, the probability that Bob makes an error when decoding the message is at most $O(n^{C^2}/q)$. Thus, the overall probability of error is at most $O(n^{C^2}/q)$.

\subsubsection{Rate analysis}
The rate of the scheme is given by $Rn'C/n = R(1- (1+kC')C/n) \leq R - R C^5(\log_2 q)/n$. Thus, the rate loss is $O(\frac{\log_2 q}{n})$.

\subsubsection{Complexity analysis}
Decoding all the secret bits takes $O(kC^5) = O(C^8 \log_2 q)$ operations in $\Fq$, while decoding the message is dominated by the secrecy decoding step with $O(C^4 n)$ operations in $\Fq$.

\medskip
{\it Note:} Both the rate loss and the error probability can be made asymptotically small by choosing $q$ to grow faster than polynomially but slower than exponentially in $n$. For instance, we may choose $q =2^{\lfloor \sqrt{n} \rfloor}$.

\section{Errata for~\cite{Jaggi.Langberg2007}} \label{sec:errata}
We briefly reprise the scheme of~\cite{Jaggi.Langberg2007} before demonstrating the flaw in the proof. In what follows, all operations are over $\Field_q$.

In the scheme of~\cite{Jaggi.Langberg2007} there exist two
hash matrices $D_0$ and $D_1$ which are chosen independently and
uniformly at random $\Cp^2(\Cp-\Numzo) \times \Cp^2$ {\it Vandermonde matrices}, {\it i.e.}, each column
of $D_0$ and $D_1$ is of the form $\mathbf
{h}(u)=[u,u^2,...,u^{\Cp^2(\Cp-\Numzo)}]^T$, where the generator $u$
is chosen independently and uniformly at random from $\Field_q$.
Both $D_0$ and $D_1$ are publicly known to all parties, including Bob and Calvin.

{\bf Alice's Encoder}: Alice first chooses a random length-$(\Cp^2(\Cp-\Numzo)-\Cp^2)$ row vector $\mathbf
u$. Let $I\in
\{0,1\}$ be the secret bit that Alice wishes to send to Bob. Alice then constructs the length-$1\times \Cp^2$ row vector $\mathbf r$ such that $[\mathbf u, \mathbf r]D_I=0$.
Note that such $\mathbf r$ exists since the last $\Cp^2$ rows of
$D_I$ form an invertible matrix. Finally the vector $[\mathbf
u,\mathbf r]$ is rearranged into a $(\Cp-\Numzo)\times \Cp^2$
matrix which is sent through the network via random linear network
coding.

{\bf Bob's Decoder}: After receiving the ${\Cp\times
\Cp^2}$ matrix $Y$, for each $I\in \{0,1\}$ Bob check whether there exists
$\Cp-\Numzo$ length-$\Cp$ vectors $\{ \mathbf {x}_i, i\in[1,{\Cp-\Numzo}]\}$ such that $[\mathbf {x}_1 Y, \mathbf {x}_2
Y,...,\mathbf {x}_{\Cp-\Numzo}Y]D_I=0$. If so, Bob decodes the secret bit as $I$. The idea is that if $I$ is Alice's bit, such $\{ \mathbf
{x}_i, i\in[1,{\Cp-\Numzo}]\}$ exists for $D_I$ with high
probability~\cite{Jaggi++2008}.

{\bf Calvin's successful attack}: When Calvin corrupts $\Numzo\geq\Cp-\Numzo$
edges, Calvin could mimic Alice's behaviour when she wishes to transmit a particular bit, say
$1$. As a result Bob would always find length-$\Cp$ row vectors $\{ \mathbf {x}_i,
i\in[1,{\Cp-\Numzo}]\}$ such that
$[\mathbf {x}_1 Y, \mathbf {x}_2 Y,...,\mathbf
{x}_{\Cp-\Numzo}Y]D_1=0$. In this case Bob cannot determine
whether the bit $1$ is from Alice or from Calvin.

Even if Calvin can only inject $\Numzo<\Cp-\Numzo$ errors, if
$\Numzo+\Numzi\geq \Cp-\Numzo$, there is another successful attack for Calvin.
To see that, without loss of generality let
$\Numzo+\Numzi=\Cp-\Numzo$. Since Calvin can eavesdrop on $\Numzi$
packets $\{\mathbf {y}_i,i\in [1,\Numzi]\}$, he can carefully
choose his $\Numzo$ injected error packets $\{\mathbf {z}_i,i\in
[1,\Numzo]\}$ so that $[\mathbf {y}_1,...,\mathbf
{y}_{\Numzi},\mathbf {z}_1,...,\mathbf {z}_{\Numzo}]D_1=0$. In this
case, Bob also always decodes its bit as $1$. Thus
the scheme in~\cite{Jaggi.Langberg2007} only works for the case
where $\Cp>2\Numzo+\Numzi$, which does not improve the result
in~\cite{Jaggi++2008}.

{\bf Why our scheme works}: In our scheme
Section~\ref{sec:single-bit}, instead of distinguishing the bit
by the hash matrices, Alice hides her secret in the rank of the
bit matrix she transmits. In particular, there is a rank gap
$\Cp-\Numzi$ between the bit matrix for bit $0$ and the one for
bit $1$. Thus as long as $\Cp-\Numzi>\Numzo$, Calvin cannot mimic
Alice any more, since he can only inject $\Numzo$ errors. As a result Bob can
determine Alice's bit by examining the rank of the matrix he decodes.

\section{Conclusion}

In this work we considered the problem of communicating information secretly and reliably over a network containing a malicious eavesdropping and jamming adversary. Under the assumptions that vanishingly small probabilities of error and block coding are allowed, we substantially improve on the best achievable rates in prior work~\cite{Ngai.Yeung2009:Secure-Error-Correcting}, and also prove the optimality of our achievable rates. A key component of our code design is a scheme that allows a small amount of information to be transmitted secretly and reliably over the network, as long as the total number of packets that the adversary can either eavesdrop on or jam is less than the communication capacity of the network. In proving this scheme we correct an error in the proof of prior work~\cite{Jaggi.Langberg2007} by a subset of the authors of this work.

\bibliographystyle{IEEEtran}
\bibliography{IEEEabrv,jammeave}

\end{document}